\spnewtheorem*{cor}{Corollary}{\bf}{\it}
\spnewtheorem*{lem}{Lemma}{\bf}{\it}
\spnewtheorem*{pro}{Proposition}{\bf}{\it}
\spnewtheorem*{teo}{Theorem}{\bf}{\it}
\newcommand\al{\alpha}
\newcommand\be{\beta}
\newcommand\dd{\partial}
\newcommand\De{\Delta}
\newcommand\de{\delta}
\newcommand\Ec{\mathcal{E}}
\newcommand\Fc{\mathcal{F}}
\newcommand\FF{\mathbb{F}}
\newcommand\ga{\gamma}
\renewcommand\ge{\geqslant}
\newcommand\La{\Lambda}
\newcommand\la{\lambda}
\newcommand\lcd{,\ldots,}
\renewcommand\le{\leqslant}
\newcommand\QQ{\mathbb{Q}}
\newcommand\Sg{\mathfrak{S}}
\newcommand\si{\sigma}
\newcommand\ts{\hspace{0.75pt}}
\newcommand\Uc{\mathcal{U}}
\newcommand\Zc{\mathcal{Z}}
\begin{document}

\title{Cherednik operators and
\\ 
Ruijsenaars\ts-Schneider model at infinity\\}
\titlerunning{Cherednik operators at infinity}

\author{M.\,L.\,Nazarov and E.\,K.\,Sklyanin}
\authorrunning{Nazarov and Sklyanin}

\institute{Department of Mathematics, University of York, 
York YO10 5DD, United Kingdom}

\date{}

\maketitle


\thispagestyle{empty} 

\begin{abstract}
Heckman introduced $N$ operators on the space of polynomials in
$N$ variables, such that these operators form a covariant set
relative to permutations of the operators and variables, and such that Jack
symmetric polynomials are eigenfunctions of the power sums of these
operators. We introduce the analogues of these $N$ operators 
for Macdonald symmetric polynomials, by using Cherednik operators. 
The latter operators pairwise commute, and Macdonald polynomials are
eigenfunctions of their power sums. We compute the limits of our
operators at $N\to\infty\,$. These limits yield a 
Lax operator for Macdonald symmetric functions. 
\end{abstract}



\section*{Introduction}

The present article is a continuation of our works \cite{NS3,NS4}.
By using the Lax operator formalism, in \cite{NS3}  
we constructed a family of pairwise commuting operators such that 
the Jack symmetric functions of the infinitely many variables 
$x_1\ts,x_2\ts,\,\ldots$ are their eigenfunctions.
We expressed these commuting operators in terms of
the power sum symmetric functions
$x_1^n+x_2^n+\ldots$ where $n=1,2,\,\ldots\ $.

The Jack symmetric functions can be regarded as limiting cases
of Macdonald symmetric functions. The latter depend on the same 
variables $x_1\ts,x_2\ts,\,\ldots$ and also on
two parameters $q,t\,$. In \cite{NS4} we extended the results of
\cite{NS3} to the latter setting. 
In particular, by again using the Lax operator formalism
we constructed a family of pairwise commuting operators such~that 
the Macdonald symmetric functions are their eigenfunctions.
In \cite{NS4} we expressed these commuting operators in terms of
the Hall\ts-Littlewood symmetric functions
of the variables $x_1,x_2,\,\ldots\ $
and of the parameter $t\ts$.
These expressions involve only
the Hall\ts-Littlewood symmetric functions  
corresponding to the partitions with one part,
see Subsection \ref{sec:111} here.

\enlargethispage{30mm}

Shortly after \cite{NS3} was published,
A.\,N.\,Sergeev and A.\,P.\,Veselov communicated to us
their remarkable works \cite{SV1,SV2} where in particular they found 
essentially the same commuting operators as we did in \cite{NS3}.
Their approach was different however. 
They first computed the limits at $N\to\infty$ 
of the Heckman operators~\cite{H} acting on all polynomials 
in the variables $x_1\lcd x_N\,$. These $N$
operators do not commute in general.
But the restrictions of the power sums of these $N$ operators
to the space of symmetric polynomials do commute. Moreover,
Jack symmetric polynomials are eigenfunctions of these restrictions.
Jack symmetric functions are then eigenfunctions
of the limits of these restrictions at $N\to\infty\,$.

\newpage
\voffset-5mm

Here we extend this approach from Jack to 
Macdonald symmetric functions. It has been discovered 
by I.\,V.\,Cherednik \cite{C}
that the Macdonald polynomials in the variables
$x_1\lcd x_N$ are eigenfunctions of power sums of some
$N$ commuting operators, acting on all polynomials in these variables.
These operators are called the Cherednik operators, see our
Subsection \ref{sec:13} for their definition.
It has been also known \cite{SV0} that the Cherednik operators
have limits at $N\to\infty\,$. 
However, explicit expressions for these limits are unknown.
We offer a solution to this open problem
by firstly introducing for the Macdonald polynomials
the analogues of non-commuting Heckman operators,
see our Subsection \ref{sec:15}. 
These analogues act on the rational functions
of $x_1\lcd x_N$ and  are denoted by $Z_1\lcd Z_N\,$.
They are related to the Cherednik operators by Proposition \ref{sec:15}.
The principal property~of the operators $Z_1\lcd Z_N$ is stated 
as Theorem~\ref{sec:16}, see also Proposition \ref{sec:13}.

An explanation is
needed here regarding our scheme of referring to lemmas,
propositions, theorems and corollaries. When referring to these, 
we indicate the subsections where they respectively appear. 
There is no more than one of each of these in every subsection, 
so our scheme should cause no confusion. 
For example, Proposition \ref{sec:13} 
is the only proposition that appears 
in Subsection~\ref{sec:13}.

In Subsection \ref{sec:17} we reformulate Theorem \ref{sec:16}
by introducing a certain $N\times N$ matrix $\Zc$ with operator entries
acting on the rational functions of the variables $x_1\lcd x_N\,$. 
It is closely related to the classical Lax matrix of the 
trigonometric Ruijsenaars\ts-Schneider model \cite{RS}. 
To be precise, let $\ga_{\ts i}$ be the operator defined by \eqref{gai}.
In the classical limit $q\to1\,$, when the canonical commutation relation 
$\ga_{\ts i}\,x_i=q^{\ts-1}\ts x_i\,\ga_{\ts i}$ degenerates 
to the Poisson bracket $\{\ts\ga_{\ts i}\ts,x_i\ts\}=-\,\ga_{\ts i}\,x_i\,$, 
our $\Zc$ degenerates to this Lax matrix 
up to a change of variables and up to conjugation by a diagonal matrix.
In the same classical limit, 
the Macdonald determinant \eqref{sedeb} degenerates to the 
characteristic determinant of the Lax matrix.
Thus 
we have found a way to derive a quantum analogue of this  
Lax matrix directly from the Cherednik operators.

It is well known how the quantum 
Hamiltonians \cite{R} of the trigonometric
Ruijsenaars\ts-Schneider model are related
to the Macdonald operators \eqref{mk}, see for instance \cite{K}.
But our generating series \eqref{Ham} for quantum Hamiltonians 
differs from the \text{Macdonald} determinant and is new.
A similar resolvent-type expression was used 
for the quantum Calogero-Sutherland model in \cite{UWH}.
The eigenstates of the latter model are the Jack symmetric polynomials.
The limit at $N\to\infty$~of the generalisation of that model to 
particles with spin has been studied in \cite{KMS}
as another extension of our work \cite{NS3}.

Following the approach of \cite{SV1,SV2}
in Subsection \ref{sec:31} of our article
we compute the limits at $N\to\infty$ of the operators $Z_1\lcd Z_N\,$.
Then we also compute the limit of the restriction 
of the operator sum \eqref{uzi} appearing in Theorem~\ref{sec:16}
to the space of symmetric polynomials in 
$x_1\lcd x_N\,$. This limit is a formal power series 
in another variable $u$ with operator coefficients acting on the
symmetric functions of $x_1,x_2,\,\ldots\ $.
After renormalisation and a change of the variable $u\,$, 
this limit becomes
the same generating series of the 
pairwise commuting operators as we constructed in \cite{NS4}.
For details, see Subsections \ref{sec:32} and \ref{sec:33} here.

In this article we generally keep to
the notation of the book \cite{M} 
for symmetric functions. When using results from \cite{M}
we simply indicate their numbers within the book.
For example, the statement (6.9) from Chapter I of the book
will be referred to as [I.6.9] assuming it is from~\cite{M}. 


\newpage

\section{Symmetric functions}
\label{sec:1}


\subsection{Standard symmetric functions}
\label{sec:11}

Fix a field $\FF\,$. For any positive integer $N\ge1$ 
denote by $\La_N$ the $\FF$-algebra of symmetric 
polynomials in $N$ variables $x_1\lcd x_N\,$.
The algebra $\La_N$ is graded by the polynomial degree.
The substitution $x_N=0$ defines
a homomorphism $\La_N\to\La_{N-1}$ preserving the degree.
Here $\La_{\ts0}=\FF\,$.
The inverse limit of the sequence
$$
\La_1\leftarrow\La_2\leftarrow\ldots
\hspace{-8pt}
$$ 
in the category of graded algebras is denoted by $\La\,$.
Note that we get a canonical homomorhism $\La\to\La_N\,$. 
The elements of the algebra
$\La$ are called \textit{symmetric functions\/}.
Following \cite{M} we now will introduce some standard bases of $\La\,$.
 
Let $\la=(\,\la_1,\la_2,\ldots\,)$ be any partition of $\,0,1,2,\ldots\,\,$. 
The number of non-zero parts is called the {\it length\/} of 
$\la$ and is denoted by $\ell(\la)\,$. 
If $\ell(\la)\le N$ then
the sum of all distinct 
monomials obtained by permuting the $N$ variables in
$x_1^{\,\la_1}\ldots x_N^{\,\la_N}$  is denoted by
$m_\la(x_1\lcd x_N)\,$. 
The symmetric polynomials $m_\la(x_1\lcd x_N)$ with
$\ell(\la)\le N$ form a basis of the vector space $\La_N\,$. 
By definition, for $\ell(\la)\le N$
\begin{equation*}
m_\la(x_1\lcd x_N)\ =
\sum_{1\le i_1<\ldots<i_k\le N}
\ \sum_{\si\in\Sg_k}\ c_\la^{-1}\ 
x_{i_{\si(1)}}^{\,\la_1}\ldots x_{i_{\si(k)}}^{\,\la_k}
\end{equation*}
where we write $k$ instead of $\ell(\la)\,$. Here
$\Sg_k$ is the symmetric group permuting the numbers $1\lcd k$
and
\begin{equation*}
c_\la=k_1!\ts\,k_2!\,\ldots
\end{equation*}
if $k_1,k_2,\ldots$ are the respective multiplicites of the parts $1,2,\ldots$
of $\la\,$. Further,
\begin{equation*}
m_\la(x_1\lcd x_{N-1},0)\,=\,
\left\{
\begin{array}{cl}
m_\la(x_1\lcd x_{N-1})
&\quad\textrm{if}\quad\,\ell(\la)<N\,;
\\[2pt]
0
&\quad\textrm{if}\quad\,\ell(\la)=N\,.
\end{array}
\right.
\end{equation*}
Hence for any fixed partition $\la$ the sequence of polynomials
$m_\la(x_1\lcd x_N)$ with $N\ge\ell(\la)$ 
has a limit in $\La\,$. This limit is called
the \textit{monomial symmetric function\/}
corresponding to $\la\,$. Simply omitting the variables,
we will denote the limit by $m_\la\,$.
With $\la$ ranging over all partitions of $0\ts,1\ts,2\ts,\ts\ldots$ 
the symmetric functions $m_\la$ form a basis of the vector space $\La\,$.
Note that if $\ell(\la)=0$ then we set $m_\la=1\,$.

We will be also using another standard basis of the vector space $\La\,$.
For each $n=1,2,\ldots$ denote $p_n(x_1\lcd x_N)=x_1^n+\ldots+x_N^n\,$.
When the index $n$ is fixed
the sequence of symmetric polynomials $p_n(x_1\lcd x_N)$ with
$N=1,2,\ldots$ has a limit in $\La\,$, called
the \textit{power sum symmetric function} of degree $n\,$.
We will denote it by $p_n\,$. 
More generally, for any partition $\la$ put
\begin{equation*}
\label{pla}
p_\la=p_{\la_1}\ldots p_{\la_k}
\end{equation*}
where $k=\ell(\la)$ as above. The elements $p_\la$
form another basis of $\La\,$. Equivalently,
the elements $p_1,p_2,\ldots$ are
free generators of the commutative algebra $\La$ over $\FF\,$.

In this article we will be using 
the \textit{natural ordering\/} of partitions.
By definition, here $\la\ge\mu$ if
$\la$ and $\mu$ are partitions of the same number and
$$
\la_1\ge\mu_1,\ \,
\la_1+\la_2\ge\mu_1+\mu_2,\ \,
\ldots\ \,.
$$
This is a partial ordering.
Note that by [I.6.9] any monomial symmetric function $m_\mu$
is a linear combination of the symmetric functions 
$p_\la$ where $\la\ge\mu\,$.

Define a bilinear form $\langle\ ,\,\rangle$ on 
$\La$ by setting for any two partitions $\la$ and $\mu$
\begin{equation}
\label{schurprod}
\langle\,p_\la\,,p_\mu\ts\rangle=k_\la\ts\de_{\la\mu}
\quad\text{where}\quad
k_\la=1^{\ts k_1}k_1!\ts\,2^{\,k_2}k_2!\ts\,\ldots
\end{equation}
in the above notation. This form is obviously symmetric
and non-degenerate. We will indicate by the superscript ${}^\perp$
the operator conjugation relative to this form. In particular, by 
\eqref{schurprod}
for the operator conjugate to the multiplication in~$\La$ by $p_n$
with $n\ge1$ we have
\begin{equation}
\label{pperp}
p_n^{\ts\perp}=n\,\dd/\dd\,p_n\,.
\end{equation}

Next put
$$
e_n(x_1\lcd x_N)\ =
\sum_{1\le i_1<\ldots<i_n\le N}
x_{i_1}\ldots x_{i_n}\,.
$$
For any fixed $n$
the sequence of the symmetric polynomials $e_n(x_1\lcd x_N)$ with
$N=1,2,\ldots$ has a limit in $\La\,$, denoted by $e_n$
and called
the \textit{elementary symmetric function\/} of degree $n\,.$
We will also use a formal power series in another variable~$v\,$,
\begin{equation}
\label{ev}
E(v)=1+e_1\ts v+e_2\ts v^{\ts2}+\ldots\,=\,\prod_{i\ge1}\,(1+x_i\ts v)\,.
\end{equation}
By taking logarithms of the left and right hand side
of the above display and~then exponentiating,
\begin{equation}
\label{euexp}
E(v)\,=\,\exp\,\Bigl(\ts-
\sum_{n\ge1}\,
\frac{p_n}n\ts (\ts-\ts v\ts)^n
\ts\Bigr)\,.
\end{equation}

Also put
$$
h_n(x_1\lcd x_N)\,=\,
\sum_{\ell(\la)\le N}\,
m_\la(x_1\lcd x_N)
$$
where the sum is taken over partitions $\la$ of $n\,.$
Then the sequence of symmetric polynomials $h_n(x_1\lcd x_N)$ 
with $N=1,2,\ldots$ has a limit in $\La\,$, 
denoted by $h_n$ and called
the \textit{complete symmetric function\/}
of degree $n\,$. By [I.2.6] for the series
\begin{equation*}
H(v)=1+h_1\ts v+h_2\ts v^{\ts2}+\ldots\,.
\end{equation*}
we have the relation
\begin{equation}
\label{eh}
E\ts(\ts-\ts v\ts)\,H(v)=1\,.
\end{equation}
Hence \eqref{euexp} implies
\begin{equation}
\label{huexp}
H(v)\,=\,\exp\,\Bigl(\,\,
\sum_{n\ge1}\,
\frac{p_n}n\ts v^{\ts n}
\ts\Bigr)\,.
\end{equation}
The elements $h_1,h_2,\ldots$ as well as the elements $e_1,e_2,\ldots$
are free generators of the commutative algebra~$\La$ over the field $\FF\,$.
We will also use the \textit{vertex operator}
\begin{equation}
\label{hperp}
H^{\ts\perp}(v)=1+h_1^{\ts\perp}\ts v+h_2^{\ts\perp}\ts v^2+\ldots
\,=\,
\exp\,\Bigl(\,\,
\sum_{n\ge1}\,
\frac{p_n^{\ts\perp}}n\ts v^n
\ts\Bigr)\,.
\end{equation}
It follows from \eqref{pperp} and \eqref{hperp} that 
for any $n=1,2,\ldots$ we have the equality
\begin{equation}
\label{vop}
H^{\ts\perp}(v)\ts\,p_n=v^n+\ts p_n\,.
\end{equation}
It also follows from \eqref{pperp} and \eqref{hperp} that 
$H^{\ts\perp}(v):\La\to\La\ts[\ts v\ts ]$ is a homomorphism of 
$\FF$-algebras. See [\ts I.5, Example 29\ts] for both of the last two
statements. Hence by applying $H^{\ts\perp}(v)$ to any symmetric function in 
the variables $x_1,x_2\ts,\,\ldots$ we get the same symmetric 
function but in the variables $v\ts,x_1,x_2\ts,\,\ldots\ \ts$.


\subsection{Hall\ts-Littlewood symmetric functions}
\label{sec:111}

Let $\FF$ be the field $\QQ\ts(t)$ with $t$ a \text{formal} parameter.
The Hall\ts-Littlewood symmetric functions [III.2.11]
are labelled by all partitions of $0\ts,1\ts,2\ts,\ts\ldots$ 
and constitute another remarkable basis of the vector space $\La$ 
over $\FF\,$. 
In the present article we will use
only the elements of this basis corresponding to 
the single part partitions $(1)\ts,(2)\ts,\,\ldots\ $. These elements 
will be denoted by $Q_1\ts,Q_{\ts2}\ts,\ts\ldots$ respectively.
Their generating series is
$$
Q(v)=E(-\ts t\ts v)\ts H(v)=1+Q_1\ts v+Q_2\ts v^{\ts2}+\ldots\,.
$$
By using \eqref{ev} and \eqref{eh} we get the relation 
\begin{equation}
\label{qvprod}
Q(v)\,=\,\prod_{i\ge1}\ts
\frac{\,1-t\,x_i\,v\,}{1-x_i\ts v}
\end{equation}
while by using \eqref{euexp} and \eqref{huexp} we get the relation
\begin{equation}
\label{qvexp}
Q(v)\,=\,\exp\,\Bigl(\,\,
\sum_{n\ge1}
\frac{1-t^{\ts n}\!}{n}\,\ts p_n\ts v^n
\ts\Bigr)\,.
\end{equation}


\subsection{Macdonald symmetric functions}
\label{sec:1111}

Now let $\FF$ be the field $\QQ\ts(q,t)$
where $q$~and~$t$ are formal parameters.
Then define a bilinear form $\langle\ ,\,\rangle_{q,t}$ on $\La$ by setting
\begin{equation}
\label{macprod}
\langle\,p_\la\,,p_\mu\ts\rangle_{q,t}=k_\la\,\de_{\la\mu}\,
\prod_{i=1}^{\ell(\la)}\,\frac{1-q^{\ts\la_i}}{1-t^{\ts\la_i}\,}
\end{equation}
for any partitions $\la$ and $\mu\,$.
This form is symmetric and non-degenerate. If $q=t\,$,
it specializes to the form defined by \eqref{schurprod}.
We will indicate by the superscript ${}^\ast$
the operator conjugation relative to $\langle\ ,\,\rangle_{q,t}\,$. 
In particular, 
by \eqref{pperp} and \eqref{macprod} 
\begin{equation*}
\label{past}
p_n^{\ts\ast}=\frac{1-q^n}{1-t^n}\,p^{\ts\perp}_n
\end{equation*}
for any $n\ge1\,$. Hence by using \eqref{qvexp} we get
\begin{equation*}
\label{qvast}
Q^{\ts\ast}(v)=1+Q_1^{\ts\ast}\ts v+Q_2^{\ts\ast}\ts v^{\ts2}+\ldots
\,=\,\exp\,\Bigl(\,\,
\sum_{n\ge1}
\frac{1-q^n\!}{n}\,\ts p_n^{\ts\perp}\ts v^n
\ts\Bigr)\,.
\end{equation*}
Note that by using \eqref{huexp}, 
the latter identity can be rewritten as
\begin{equation}
\label{qhh}
Q^{\ts\ast}(v)=
H^{\ts\perp}(\ts v\,q\ts)^{\ts-1}
H^{\ts\perp}(v)\,.
\end{equation}
Similarly to $H^{\ts\perp}(v)$ the map 
$Q^{\ts\ast}(v):\La\to\La\ts[\ts v\ts ]$ is a homomorphism of 
$\FF$-algebras.

By [VI.4.7] 
there exists a unique family of elements $P_\la\in\La$ such that
$$
\langle\,P_{\ts\la}\,,P_{\ts\mu}\ts\rangle_{q,t}=0
\quad\text{for}\quad
\la\neq\mu
$$
and such that any $P_{\ts\la}$ equals $m_\la$ 
plus a linear combination of the elements $m_{\ts\mu}$
with $\mu<\la$ in the natural partial ordering.
The elements $P_{\ts\la}\in\La$ are called 
the \textit{Macdonald symmetric functions\/}.

By [VI.4.10] the canonical homomorphism $\La\to\La_N$ maps
$P_{\ts\la}\mapsto0$ if $\ts\ell(\la)>N$.
If $\ell(\la)\le N$ then the image of $P_{\ts\la}\in\La$
under the homomorphism $\La\to\La_N$ is
the \textit{Macdonald symmetric polynomial\/}
usually denoted by 
$P_{\ts\la}(x_1\lcd x_N)\,$.
All these polynomials 
with $\ell(\la)=0\ts,1\ts\lcd N$
form a basis of the vector space $\La_N$ over $\FF\ts$.

\newpage


\section{Cherednik operators}

\subsection{Macdonald operators}
\label{sec:12}

Let $\FF=\QQ\ts(q,t)$ as in Subsection \ref{sec:1111}.
For $i=1\lcd N$
the \textit{inverse q\ts-shift operator\/} $\ga_{\ts i}$ acts on any
rational function 
$f\in\FF\ts(\ts x_1\lcd x_N)$~by
\begin{equation}
\label{gai}
(\ga_{\ts i}\ts f)(x_1\lcd x_N)=f(x_1\lcd q^{\ts-1}x_i\lcd x_N)\,.
\end{equation}
Denote by $\De\ts(\ts x_1\lcd x_N)$ the {\it Vandermonde polynomial\/} 
of $N$ variables
$$
\det\Big[x_i^{\,N-j}\Big]{\phantom{\big[}\!\!}_{i,j=1}^N=
\prod_{1\le i<j\le N}(\ts x_i-x_j)\,.
$$
Put
\begin{equation}
\label{dnu}
D_N(u)=
\De\ts(\ts x_1\lcd x_N)^{-1}\cdot
\det\Big[\,
x_i^{\,N-j}\bigl(\ts 1+u\,t^{\,j-1}\ts\ga_{\ts i}\ts\bigr)
\Big]{\phantom{\big[}\!\!}_{i,j=1}^N
\hspace{-8pt}
\end{equation}
where $u$ is another variable.
The last determinant is defined as the alternated~sum
\begin{equation}
\label{sedeb}
\sum_{\si\in\Sg_N}
(-1)^{\si}\,
\prod_{i=1}^N\,\,\bigl(\,
x_i^{\,N-\si(i)}\bigl(\ts 1+u\,t^{\,\si(i)-1}\ts\ga_{\ts i}\ts\bigr)\bigr)
\end{equation}
where as usual $(-1)^{\si}$ denotes the sign of permutation $\si\ts$.
In every product over $i=1\lcd N$ appearing in the alternated sum
all the operator factors pairwise commute, 
hence their ordering does not matter. 
Note that $D_N(0)=1\,$.

By \eqref{dnu} the $D_N(u)$ is a polynomial of degree $N$ in the variable $u$
with operator coefficients. It also follows 
from \eqref{dnu} that these coefficients~map the space $\La_N$ to itself.
By [VI.3.3] for any $k=1\lcd N$ the coefficient of $D_N(u)$ at $u^k$ equals
\begin{equation}
\label{mk}
\sum_{|I|=k}\,S_I(x_1\lcd x_N)\,\prod_{i\in I}\ts\ga_{\ts i}
\end{equation}
where the sum is taken over all subsets $I$ of $\{1\lcd N\}$ of size $k\,$, 
whereas
$$
S_I(x_1\lcd x_N)=t^{\ts\,k\ts(k-1)/2}\,
\prod_{\substack{i\in I\\j\notin I}}\,
\frac{\,x_i-t\,x_j}{x_i-x_j}\ .
$$

Now for every $k=1\lcd N$ 
consider the restriction of the operator \eqref{mk}
to the space $\La_N\,$. By [VI.4.16]
all these restrictions to $\La_N$ pairwise commute.
They are called the \textit{Macdonald operators\/}.
The Macdonald polynomials $P_{\ts\la}(x_1\lcd x_N)$ 
with $\ell(\la)\le N$ 
make a common eigenbasis of these operators.
By [VI.4.15] the eigenvalue of $D_N(u)$ corresponding to any such eigenvector 
$P_{\ts\la}(x_1\lcd x_N)$ is
\begin{equation}
\label{eigenvalue}
\prod_{i=1}^N\,(\ts1+u\,q^{\ts-\la_i}\ts t^{\,i-1}\ts)\,\ts.
\end{equation}

Note that our definition \eqref{dnu} of the $D_N(u)$ differs from
[VI.3.2] by changing the parameters $q\mapsto q^{\ts-1}$ and
$t\mapsto t^{\ts-1}\,$. However, by [VI.4.14]
the Macdonald polynomials $P_{\ts\la}(x_1\lcd x_N)$ 
are invariant under this change of their parameters.
After this change, we also replaced the variable $X$ 
used in the definition [VI.3.2] by $u\,t^{\,N-1}\ts$. The reasons 
for these alterations will be explained in Subsection \ref{sec:31}.


\subsection{Cherednik operators}
\label{sec:13}

For $i,j=1\lcd N$ with $i\neq j$ introduce the operator acting 
on the vector space $\FF(\ts x_1\lcd x_N\ts)$
\begin{equation}
\label{rij}
R_{\ts ij}=1+\frac{(\ts1-t\ts)\,x_j}{x_i-x_j}\,(\ts1-\si_{ij})=
\frac{x_i-t\,x_j}{x_i-x_j}+
\frac{(\,t-1\ts)\,x_j}{x_i-x_j}\,\si_{ij}
\end{equation}
where $\si_{ij}\in\Sg_N$ acts
by exchanging the variables $x_i$ and $x_j\,$.
It is immediately obvious from the definition \eqref{rij} that
the operator $R_{ij}$ maps polynomials in the variables $x_1\lcd x_N$
to polynomials. Further, one can check that
\begin{equation*}
t\,R_{\ts ij}^{\ts-1}=t+\frac{(\,t-1\ts)\,x_j}{x_i-x_j}\,(1-\si_{ij})=
\frac{t\,x_i-x_j}{x_i-x_j}+
\frac{(\ts1-t\,)\,x_j}{x_i-x_j}\,\si_{ij}\,.
\end{equation*}
The \textit{Cherednik operators\/} $C_1\lcd C_N$
acting on $\FF[\ts x_1\lcd x_N\ts]$ are then defined~by
\begin{equation}
\label{ci}
C_i=t^{\,i-1}\ts 
R_{\ts i,i+1}\ldots R_{\ts iN}\,
\ga_{\ts i}\,
R_{\ts 1i}^{\ts-1}\ldots R_{\ts i-1,i}^{\ts-1}\,.
\end{equation}
These operators pairwise commute.
In general, they do not map the space $\La_N$ to itself.
However, any symmetric polynomial of 
the operators $C_1\lcd C_N$ with the coefficients from the field $\FF$ does.
Moreover by \cite[Subsection 1.3.5]{C} we have

\begin{pro}
The action of $D_N(u)$ on $\La_N$ coincides with that of the product
\begin{equation}
\label{cprod}
\prod_{i=1}^N\,\ts(1+u\,C_i)\,. 
\end{equation}
\end{pro}

In accord with the remark we made at the end of previous subsection,
the operator $C_i$ differs from the operator 
defined by \cite[Equation 1.3.32]{C}
by changing the parameters $q\mapsto q^{\ts-1}$ and
$t\mapsto t^{\ts-1}\,$.
Our normalisation of $C_i$ is also different.


\subsection{Coherence property}
\label{sec:14}

We will use the following property of operators \eqref{ci}.
For $k=1\lcd N-1$ let 
\begin{equation}
\label{cik}
C_1^{\ts(k)}\lcd\ts C_{N-k}^{\ts(k)}
\end{equation}
be the Cherednik operators acting on 
$\FF[\ts x_{k+1}\lcd x_N\ts]$ instead of $\FF[\ts x_1\lcd x_N\ts]\,$.

\begin{lem}
The action of 
\begin{equation}
\label{kprod}
\prod_{i=k+1}^N\!(1+u\,C_i) 
\hspace{-2pt}
\end{equation}
on the space $\La_N$ coincides with the action of 
\begin{equation}
\label{sprod}
\prod_{i=1}^{N-k}\ts(1+u\,t^{\,k}\ts C_i^{\ts(k)})\,. 
\hspace{-20pt} 
\end{equation}
\end{lem}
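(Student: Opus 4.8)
The plan is to reduce the identity of the two operator products to a single, sharper statement about the individual Cherednik operators, combined with a careful bookkeeping of partial symmetry. The starting point is the elementary observation, immediate from \eqref{rij}, that for any $i\neq j$ the operator $R_{\ts ij}$ acts as the identity on every rational function symmetric in the two variables $x_i$ and $x_j\ts$; the same holds for $R_{\ts ij}^{\ts-1}$, since the explicit formula for $t\,R_{\ts ij}^{\ts-1}$ recorded in Subsection \ref{sec:13} again contains the factor $1-\si_{ij}$ that annihilates such functions.

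Next I would compare the two products factor by factor. Writing out \eqref{ci} for $i>k$ and writing out the operator $C_{\ts i-k}^{\ts(k)}$ from \eqref{cik} in the variables $x_{k+1}\lcd x_N\ts$, one sees that $t^{\ts k}\ts C_{\ts i-k}^{\ts(k)}$ is obtained from $C_i$ precisely by deleting the block $R_{\ts1i}^{\ts-1}\ldots R_{\ts ki}^{\ts-1}\ts$; all powers of $t$ match because $t^{\ts i-1}=t^{\ts k}\cdot t^{\ts(i-k)-1}$. Thus $C_i$ and $t^{\ts k}\ts C_{\ts i-k}^{\ts(k)}$ differ only by these crossing factors $R_{\ts1i}^{\ts-1}\ldots R_{\ts ki}^{\ts-1}\ts$, which involve $x_i$ together with $x_1\lcd x_k\ts$. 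By the first step, both $C_i$ and $t^{\ts k}\ts C_{\ts i-k}^{\ts(k)}$ therefore send any function $g$ symmetric in $x_1\lcd x_i$ to the common value $t^{\ts i-1}R_{\ts i,i+1}\ldots R_{\ts iN}\,\ga_{\ts i}\,g\ts$; moreover, since $\ga_{\ts i}$ only rescales $x_i$ and the remaining operators $R_{\ts i,i+1}\lcd R_{\ts iN}$ do not involve $x_1\lcd x_{i-1}\ts$, this common value is again symmetric in $x_1\lcd x_{i-1}\ts$.

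With these two facts the Lemma follows by a descending induction. Using that the operators $C_i$ pairwise commute, and likewise the $C_{\ts j}^{\ts(k)}\ts$, I would write \eqref{kprod} and \eqref{sprod} in the same order $i=N,N-1\lcd k+1$ and apply them to a fixed $f\in\La_N$ factor by factor from the right. The inductive claim is that after applying the factors indexed by $N,\ldots,i+1$ the two partial results coincide and are symmetric in $x_1\lcd x_i\ts$. The base case $i=N$ is just $f\in\La_N\ts$. For the inductive step, the common partial result $g$ is symmetric in $x_1\lcd x_i\ts$, so by the previous paragraph the next factors $(1+u\,C_i)$ and $(1+u\,t^{\ts k}C_{\ts i-k}^{\ts(k)})$ act identically on $g\ts$, and the new common value is symmetric in $x_1\lcd x_{i-1}\ts$. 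After exhausting all indices down to $k+1$ the two products agree on $f\ts$, which is the assertion.

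The one genuine subtlety — and the point I expect to require the most care — is that the coincidence of the individual operators on the fully symmetric space $\La_N$ is by itself insufficient: each factor $(1+u\,C_i)$ fails to preserve $\La_N$ (the shift $\ga_{\ts i}$ breaks the symmetry in $x_i$), so one cannot simply multiply the coincidences together. The resolution is exactly the chosen order of multiplication: processing the indices in decreasing order ensures that the symmetry in $x_1\lcd x_{i-1}$ surviving after the $i$-th step is precisely the symmetry needed to invoke the coincidence of the $(i-1)$-st factors at the next step. Verifying that this partial symmetry is preserved at every stage, and that it is exactly the symmetry annihilating the crossing factors $R_{\ts1i}^{\ts-1}\ldots R_{\ts ki}^{\ts-1}\ts$, is the crux of the argument.
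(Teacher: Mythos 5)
Your proof is correct, and while it rests on the same two elementary facts as the paper's proof --- that $R_{\ts ij}$ and $R_{\ts ij}^{\ts-1}$ act as the identity on functions symmetric in $x_i$ and $x_j\ts$, and that the factors must be processed in decreasing order of the index --- it is organized along a genuinely different route. The paper proves, by downward induction on $k\ts$, the operator-level statement that \eqref{kprod} acts on $\La_N$ as the product \eqref{rprod} of the truncated operators $t^{\,i-1}\ts R_{\ts i,i+1}\ldots R_{\ts iN}\,\ga_{\ts i}\ts$: in the induction step the trailing block $R_{\ts 1k}^{\ts-1}\ldots R_{\ts k-1,k}^{\ts-1}$ of the newly adjoined factor $C_k$ is commuted rightward past all later factors (which involve only $x_{k+1}\lcd x_N$) until it reaches $\La_N\ts$, where it acts trivially; the Lemma then follows by invoking this statement twice, once for $C_1\lcd C_N$ at $k=0$ and once, after relabelling indices, for the shifted family \eqref{cik}. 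You bypass both the commutation step and the intermediate product \eqref{rprod}: your key observation, that $C_i$ and $t^{\,k}\ts C^{\ts(k)}_{i-k}$ agree on any function symmetric in $x_1\lcd x_i$ and return a function symmetric in $x_1\lcd x_{i-1}\ts$, lets you compare the two products pairwise in a single vector-level induction that records exactly how much symmetry survives each application. Your version is the more economical and makes the crux explicit --- no individual factor preserves $\La_N\ts$, and it is precisely the residual symmetry in $x_1\lcd x_{i-1}$ that licenses the comparison of the next pair of factors --- whereas the paper's version stays at the level of operator identities (its relation $\sim$ on $\La_N$) and exhibits the common reduced form \eqref{rprod} of both products, so that the coincidence becomes two instances of one statement. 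In substance your partial-symmetry bookkeeping is the vector-level shadow of the paper's disjoint-variables commutation argument, but as a proof it stands on its own and is complete: the factor-by-factor matching, the power count $t^{\,i-1}=t^{\,k}\ts t^{\,(i-k)-1}$, and the appeal to pairwise commutativity of each family to fix the order of multiplication are all exactly what is needed.
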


\begin{proof}
First let us prove by the downward induction on $k=N,N-1\,\lcd 1\ts,0$
that the action of \eqref{kprod}
on the space $\La_N$ coincides with the action of the product
\begin{equation}
\label{rprod}
\prod_{i=k+1}^N\!(1+u\,t^{\,i-1}\ts 
R_{\ts i,i+1}\ldots R_{\ts iN}\,
\ga_{\ts i}\,) 
\end{equation}
where the factors corresponding to the indices $i=k+1\lcd N$ are arranged 
from left to right. If $k=N$ then neither of the products \eqref{kprod}
and \eqref{rprod} has any factors, so the statement to prove is trivial.
Now assume that our statement is already proved for some $k>0\,$.
Consider the product obtained from \eqref{kprod} by replacing
the index $k$ by $k-1\,$. By the induction assumption, the action on $\La_N$
of the so obtained product coincides with that of
\begin{equation}
\label{istep}
(\ts1+u\,C_k)\!
\prod_{i=k+1}^N\!(\ts1+u\,t^{\ts\,i-1}\ts 
R_{\ts i,i+1}\ldots R_{\ts iN}\,
\ga_{\ts i}\,) 
\end{equation}
The last $k-1$ factors of the Cherednik operator $C_k$ appearing 
in \eqref{istep}
$$
R_{\ts 1k}^{\ts-1}\,\lcd R_{\ts k-1,k}^{\ts-1}
$$
commute with $R_{\ts i,i+1}$ and $\ga_{\ts i}$ for any $i=k+1\lcd N\ts$. 
They also act trivially on $\La_N\,$. After removing these $k-1$ factors 
from $C_k$ in \eqref{istep} we get the product
$$
(\ts1+u\,t^{\,k-1}\ts R_{\ts k,k+1}\ldots R_{\ts kN}\,\ga_k\,)\!
\prod_{i=k+1}^N\!(\ts1+u\,t^{\,i-1}\ts 
R_{\ts i,i+1}\ldots R_{\ts iN}\,
\ga_{\ts i}\,)\,. 
$$
Thus we are making the induction step, and our statement is proved 
for any $k\,$.

By using this statement in the particular case when $k=0\,$, the action of 
\eqref{cprod}
on the space $\La_N$ coincides with the action of the product
$$
\prod_{i=1}^N\,\,(\ts1+u\,t^{\,i-1}\ts 
R_{\ts i,i+1}\ldots R_{\ts iN}\,
\ga_{\ts i}\,)\,.
$$
By applying the latter result to the set of 
operators \eqref{cik}
instead of $C_1\lcd C_N$ we obtain that for $0<k<N$ 
the action of \eqref{sprod}
on $\La_N$ coincides with that of
$$
\prod_{i=1}^{N-k}\,(\ts1+u\,t^{\,i+k-1}\ts 
R_{\ts i+k,i+k+1}\ldots R_{\ts i+k,N}\,
\ga_{\ts i+k}\,)\,.
$$
The last displayed product equals \eqref{rprod}
by renaming $i+k$ to $i\,$. 
But we had also proved that the action of \eqref{kprod}
on $\La_N$ coincides with the action of \eqref{rprod}.
\qed
\end{proof}


\subsection{Covariant operators}
\label{sec:15}

For $i,j=1\lcd N$ with $i\neq j$ denote 
\begin{equation}
\label{AB}
A_{\ts ij}=
\frac{\,x_i-t\,x_j}{x_i-x_j}
\quad\text{and}\quad
B_{\ts ij}=\frac{(\,t-1\ts)\,x_j}{x_i-x_j}
\end{equation}  
so that by \eqref{rij}
$$
R_{\ts ij}=A_{\ts ij}+B_{\ts ij}\,\si_{ij}\,.
$$
Define operators $Z_1\lcd Z_N$ acting on $\FF\ts(\ts x_1\lcd x_N)$ 
by setting $Z_i=W_i\ts\ga_{\ts i}$ where
\begin{equation}
\label{wi}
W_i\,=\, 
\prod_{l\neq i}\,A_{\ts il}\,+\,
\sum_{j\neq i}\,B_{\ts ij}\,
\Bigl(\,\prod_{l\neq i,j}A_{\ts jl}\ts\Bigr)\,\si_{ij}\,.
\end{equation}
In general, these operators do not map polynomials
in the variables $x_1\lcd x_N$ to polynomials.  
But by definition, these operators make a \textit{covariant set\/}
relative to the action of the symmetric group $\Sg_N$ by permutations 
of the \text{variables\ts:}
\begin{equation}
\label{coz}
\si^{\ts-1}\ts Z_i\,\si=Z_{\ts\si(i)}
\quad\text{for}\quad
\si\in\Sg_N\,.
\end{equation}
Note that for $N>1$ the operators $C_1\lcd C_N$ on
$\FF[\ts x_1\lcd x_N]$ do not enjoy the covariance property.
On the other hand, our $Z_1\lcd Z_N$ do not commute.

For $k=1\lcd N-1$ let 
$\La_N^{\ts(k)}\subset\ts\FF[\ts x_1\lcd x_N\ts]$
be the subspace of polynomials symmetric in the variables
$x_{k+1}\ts\lcd x_N\,$. Then
$$
\La_N\subset\La_N^{\ts(1)}\subset\ldots\subset\La_N^{\ts(N-1)}
=\ts\FF[\ts x_1\lcd x_N\ts]\,.
$$
Now consider the Cherednik operator $C_1$ acting 
on $\FF[\ts x_1\lcd x_N\ts]\,$.
Our definition of the operator $Z_1$ originates from the 
following proposition. 

\begin{pro}
The actions of the operators $C_1$ and\/ $Z_1$ on $\La_N^{\ts(1)}$ coincide.
\end{pro}

\begin{proof}
We will prove that the action of $C_1$ on $\La_N^{\ts(k)}$ 
coincides with the action of
\begin{equation}
\label{rk}
R_{\ts 12}\ldots R_{\ts 1k}\,\,
\Bigl(\ 
\prod_{k<\ts l\ts\le N}A_{\ts1l}\,+\!
\sum_{k<j\le N}B_{\ts1j}\,
\Bigl(\ \prod_{\substack{k<\ts l\ts\le N\\l\neq j}}A_{\ts jl}\ts
\Bigr)\,\si_{\ts1j}\ts
\Bigr)\,\ga_{\ts1}\,.
\end{equation}
We will use the downward induction on $k=N-1\,\lcd 1\,$.
Our proposition will be then obtained when $k=1\,$.
If $k=N-1$ then by the definition \eqref{ci} we have
$$
C_1=R_{\ts 12}\ldots R_{\ts 1N}\,
\ga_{\ts1}=
R_{\ts 12}\ldots R_{\ts 1,N-1}\,(\ts A_{\ts1N}+B_{\ts1N}\,\si_{\ts1N}\ts)\,
\ga_{\ts1}
$$
as required. Now assume that our statement is proved for some $k>1\,$. Since
\begin{equation}
\label{km}
\La_N^{\ts(k-1)}\subset\La_N^{\ts(k)}
\end{equation}
we then know in particular that the action of $C_1$ on the space
$\La_N^{\ts(k-1)}$ 
coincides with the action of the product \eqref{rk}. 
The latter product can be rewritten as
\begin{gather*}
R_{\ts 12}\ldots R_{\ts 1,k-1}\ \times
\\[2pt]
(\ts A_{\ts1k}+B_{\ts1k}\,\si_{\ts1k}\ts)\,\,
\Bigl(\ 
\prod_{k<\ts l\ts\le N}A_{\ts1l}\,+\!
\sum_{k<j\le N}B_{\ts1j}\,
\Bigl(\ \prod_{\substack{k<\ts l\ts\le N\\l\neq j}}A_{\ts jl}\ts
\Bigr)\,\si_{\ts1j}\ts
\Bigr)\,\ga_{\ts1}\,.
\end{gather*}
In its turn, the expression in the last displayed line can be rewritten as
\begin{gather*}
\Bigl(\ 
\prod_{k\le\ts l\ts\le N}\!A_{\ts1l}\,+\,
B_{\ts1k}\,
\Bigl(\ \prod_{k<\ts l\ts\le N}A_{\ts kl}\ts
\Bigr)\,\si_{\ts1k}\ +
\\[2pt]
\sum_{k<j\le N}\,
(\ts A_{\ts1k}\,B_{\ts1j}+B_{\,1k}\,B_{\ts kj}\,\si_{\ts1k}\ts)\,\,
\Bigl(\ \prod_{\substack{k<\ts l\ts\le N\\l\neq j}}A_{\ts jl}\ts
\Bigr)\,\si_{\ts1j}\ts
\Bigr)\,\ga_{\ts1}\,.
\end{gather*}
Here none of the indices of the factor $A_{\ts jl}$ can be equal to
$1$ or $k\,$, because $j>k$ and $l\ts>k\,$. 
Further, here 
$\si_{\ts1k}\,\si_{\ts1j}=\si_{\ts1j}\,\si_{jk}$
where the factor $\si_{jk}$ commutes with the operator $\ga_{\ts1}$ 
on 
$\FF[\ts x_1\lcd x_N\ts]$
and acts trivially on the subspace \eqref{km}. Thus by the identity
$$
A_{\ts1k}\,B_{\ts1j}+B_{\,1k}\,B_{\ts kj}=B_{\ts1j}\,A_{\ts jk}
$$
the action of the operator $C_1$ on $\La_N^{\ts(k-1)}$ coincides with
the action of
\begin{gather*}
R_{\ts 12}\ldots R_{\ts 1,k-1}\ \times
\\[6pt]
\Bigl(\ 
\prod_{k\le\ts l\ts\le N}\!A_{\ts1l}\,+\,
B_{\ts1k}\,
\Bigl(\ \prod_{k<\ts l\ts\le N}A_{\ts kl}\ts
\Bigr)\,\si_{\ts1k}\,+\!
\sum_{k<j\le N}B_{\ts1j}\,
\Bigl(\ \prod_{\substack{k\le\ts l\le\ts N\\l\neq j}}A_{\ts jl}\ts
\Bigr)\,\si_{\ts1j}\ts
\Bigr)\,\ga_{\ts1}
\\
\ =
R_{\ts 12}\ldots R_{\ts 1,k-1}\,\,
\Bigl(\ 
\prod_{k-1<\ts l\ts\le N}A_{\ts1l}\,+\!
\sum_{k-1<j\le N}B_{\ts1j}\,
\Bigl(\ \prod_{\substack{k-1<\ts l\ts\le N\\l\neq j}}A_{\ts jl}\ts
\Bigr)\,\si_{\ts1j}\ts
\Bigr)\,\ga_{\ts1}\,.
\end{gather*}
Thus we have made the downward induction step.
\qed
\end{proof}

We have already noted that for any $i=1\lcd N$ the 
Cherednik operator $C_i$ maps the
polynomials in the variables $x_1\lcd x_N$ to polynomials.
On the other hand, the operator $Z_i$ commutes with those permutations
of the variables that preserve $x_i\,$.
By using these two observations when $i=1\ts$, our proposition implies

\begin{cor}
Both operators $C_1$ and $Z_1$ map the space $\La_N^{\ts(1)}$ to itself.
\end{cor}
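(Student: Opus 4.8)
The plan is to combine the preceding Proposition with the two complementary observations recorded just above the statement of the corollary. Since that Proposition asserts that $C_1$ and $Z_1$ act identically on $\La_N^{\ts(1)}\ts$, it suffices to show that for any $f\in\La_N^{\ts(1)}$ the common value $C_1\ts f=Z_1\ts f$ again lies in $\La_N^{\ts(1)}\ts$, that is, is a polynomial symmetric in the variables $x_2\lcd x_N\ts$. The idea is that one of the two descriptions of this image supplies polynomiality and the other supplies the required symmetry, and that the agreement on $\La_N^{\ts(1)}$ lets me import both conclusions simultaneously.

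First I would extract the symmetry from the operator $Z_1\ts$. The permutations $\si\in\Sg_N$ fixing the index $1$ are exactly those permuting $x_2\lcd x_N\ts$, and they make up the stabilizer of $x_1\ts$; for each such $\si$ the covariance relation \eqref{coz} gives $\si^{\ts-1}\ts Z_1\ts\si=Z_{\ts\si(1)}=Z_1\ts$, so $Z_1$ commutes with every permutation of $x_2\lcd x_N\ts$. Hence for $f\in\La_N^{\ts(1)}$ and any such $\si$ we get $\si\ts(Z_1\ts f)=(\ts Z_1\ts\si)\ts f=Z_1\ts(\si\ts f)=Z_1\ts f\ts$, so $Z_1\ts f$ is symmetric in $x_2\lcd x_N$ as a rational function.

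Next I would extract polynomiality from the operator $C_1\ts$. Because each $R_{\ts ij}$ and each $R_{\ts ij}^{\ts-1}$ maps polynomials to polynomials --- the former is immediate from \eqref{rij}, the latter from the displayed formula for $t\ts R_{\ts ij}^{\ts-1}$ --- and $\ga_{\ts1}$ clearly does as well, the product \eqref{ci} defining $C_1$ sends $\FF[\ts x_1\lcd x_N\ts]$ into itself, so $C_1\ts f$ is a polynomial. Invoking the preceding Proposition, $C_1\ts f=Z_1\ts f\ts$; this element is therefore at once a polynomial and symmetric in $x_2\lcd x_N\ts$, i.e. it belongs to $\La_N^{\ts(1)}\ts$. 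Since $C_1$ and $Z_1$ coincide on $\La_N^{\ts(1)}\ts$, both operators map $\La_N^{\ts(1)}$ into itself.

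The step requiring the most care is this final fusion: on its own, $Z_1\ts f$ is known only to be a symmetric rational function, while $C_1\ts f$ is known only to be a possibly non\ts-symmetric polynomial, and neither fact in isolation places the image in $\La_N^{\ts(1)}\ts$. The real content is that the Proposition identifies these two \emph{a priori} distinct functions, after which the symmetry visible in the $Z_1$\ts-picture and the polynomiality visible in the $C_1$\ts-picture can be read off one and the same element. Beyond verifying, as above, that $R_{\ts ij}^{\ts-1}$ preserves polynomials, no further computation is needed.
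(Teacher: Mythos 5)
Your proposal is correct and follows essentially the same argument as the paper: it combines the polynomiality of $C_1$ (read off from the definition \eqref{ci}, with $R_{\ts ij}$ and $R_{\ts ij}^{\ts-1}$ preserving polynomials) with the symmetry of $Z_1\ts f$ in $x_2\lcd x_N$ (from the covariance \eqref{coz}), using the Proposition to identify the two images on $\La_N^{\ts(1)}$. Your spelled-out verification of the covariance step and of the ``fusion'' of the two facts is exactly the content the paper compresses into its two preliminary observations.
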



\subsection{Main identity}
\label{sec:16}

Our main result of the current section is the theorem below. 
Define operators $U_1\lcd U_N$ acting 
on $\FF(\ts x_1\lcd x_N)$ by setting
\begin{equation}
\label{ui}
U_i\,=\,(\,t-1\ts)\,\ts
\Bigl(\,\,\ts
\prod_{l\neq i}\,A_{\ts il}\ts
\Bigr)\,\ga_{\ts i}\,.
\end{equation}
Similarly to $Z_1\lcd Z_N$ the operators $U_1\lcd U_N$
make a covariant set relative to the action of the
group $\Sg_N$ by permutations of the variables $x_1\lcd x_N\,$:
\begin{equation*}
\label{cou}
\si^{\ts-1}\ts U_i\,\si=U_{\si(i)}
\quad\text{for}\quad
\si\in\Sg_N\,.
\end{equation*}

\begin{teo}
The action of the ratio 
$D_N(\ts u\,t\ts)\ts/D_N(u)$ on $\La_N$
coincides with the action of the sum
\begin{equation}
\label{uzi}
1\,+\,u\,\sum_{i=1}^N\,U_i\,(\ts1+u\,Z_i\ts)^{\ts-1}\,.
\end{equation}
\end{teo}

\begin{proof}
We will relate operators on the space $\FF(\ts x_1\lcd x_N)$ 
by the symbol $\sim$ if their actions on the subspace $\La_N$ coincide. 
In Subsection \ref{sec:12} we already noted that the coefficients of the
polynomial $D_N(u)$ map the space $\La_N$ to itself. Let us multiply
by $D_N(u)$ on the right both the ratio and the sum appearing in our theorem,
and then subtract $D_N(u)$ from the results. We get to prove the relation
\begin{equation}
\label{dif}
D_N(u\,t\ts)-D_N(u)
\,\sim\,u\,
\sum_{i=1}^N\,U_i\,(1+u\,Z_i)^{\ts-1}\,D_N(u)\,.
\end{equation}

In the notation of Subsection \ref{sec:12} the left hand
side of the relation \eqref{dif}~equals 
$$
u\,\sum_{k=1}^N\,u^{\ts k-1}(\,t^{\,k}-1\ts)
\sum_{|I|=k}S_I(x_1\lcd x_N)\,\prod_{i\in I}\ts\ga_{\ts i}\,.
$$
Now consider the summand at the right hand side of \eqref{dif}
with the index $i=1\,$. By Proposition \ref{sec:13}
the action of this summand on $\La_N$ coincides with that of 
$$
U_1\,(1+u\,Z_1)^{-1}\,
\prod_{j=1}^N\,\,(1+u\,C_j)
\,\sim\,
U_1\,
\prod_{j=2}^N\,\,(1+u\,C_j)
\,\sim\,
U_1
\prod_{j=1}^{N-1}\ts(1+u\,t\,C_j^{\ts(1)}\ts)
$$
where we used Proposition \ref{sec:15} and then
Lemma \ref{sec:14} in the particular case $k=1\,$.
Hence by applying Proposition \ref{sec:13} once again, 
but to the Cherednik operators 
$$
C_1^{\ts(1)}\lcd\ts C_{N-1}^{\ts(1)}
$$
instead of $C_1\lcd C_N$ we obtain that 
the summand at the right hand side of the relation \eqref{dif}
with the index $i=1$ acts on $\La_N$ as
\begin{gather}
\nonumber
U_1\,\sum_{k=1}^N\,(\ts u\,t\ts)^{\ts k-1}
\!\!\sum_{|J|=k-1}\!\!
S_J(x_2\lcd x_N)\,\prod_{j\in J}\ts\ga_{\ts j}\,=
\\
\label{gaj}
\sum_{k=1}^N\,(\ts u\,t\ts)^{\ts k-1}\ts(\,t-1\ts)
\!\sum_{|J|=k-1}\!
\Bigl(\ 
\prod_{l\neq 1}\,A_{\ts1l}\,
\Bigr)\,
S_J(x_2\lcd x_N)\,\ts
\ga_{\ts 1}\ts\prod_{j\in J}\ts\ga_{\ts j}\,.
\end{gather}
Here $J$ ranges over all subsets of $\{2\ts\lcd N\}$ of size $k-1\,$.
It follows that the summands at the right hand side of \eqref{dif}
with 
$i=2\ts\lcd N$ act on $\La_N$  
as the operators obtained from \eqref{gaj} via conjugation by 
$\si_{\ts12}\,\lcd\si_{\ts1N}$ respectively.

Thus the right hand side of \eqref{dif}
acts on $\La_N$ as the operator sum of the~form
$$
\sum_{k=1}^N\,u^{\ts k-1}
\sum_{|I|=k}T_{\ts I\ts}(x_1\lcd x_N)\,\prod_{i\in I}\ts\ga_{\ts i}
$$
where $I$ ranges over all subsets of $\{1\lcd N\}$ of size $k\,$,
and each $T_{\ts I\ts}(x_1\lcd x_N)$ is a certain rational function of
the variables $x_1\lcd x_N$ over the field $\QQ\ts(t)\,$.
To prove the relation \eqref{dif} it now suffices to demonstrate that
for each $I$ 
\begin{equation}
\label{ver}
(\,t^{\,k}-1\ts)\,S_I(x_1\lcd x_N)=
T_{\ts I\ts}(x_1\lcd x_N)\,.
\end{equation}
Moreover, because both sides of \eqref{dif}
are invariant under conjugation by the elements of $\Sg_N\ts$,
it suffices to verify \eqref{ver} only in the case when $I=\{1\lcd k\}\,$.
Note that in the latter case the left hand side of \eqref{ver} equals  
\begin{equation}
\label{idone}
(\,t^{\,k}-1\ts)\,\,t^{\,k\ts(k-1)/2}
\!\!\prod_{\substack{1\le\ts j\ts\le k\\k<\ts l\ts\le N}}\!\!A_{\ts jl}\,.
\end{equation}

Now consider the right hand side of \eqref{ver} in the case when
$I=\{1\lcd k\}\,$. Let us denote it by $T$ for short. 
The contribution to $T$ from \eqref{gaj}
corresponds to the set $J=\{2\ts\lcd k\}$ and hence equals
\begin{align}
\nonumber
t^{\,k-1}\ts(\,t-1\ts)\,
\Bigl(\ 
\prod_{l\neq 1}\,A_{\ts 1l}\,
\Bigr)\,\ts
&t^{\,(k-1)\ts(k-2)/2}
\!\!\prod_{\substack{2\le\ts j\ts\le k\\k<\ts l\ts\le N}}\!\!A_{\ts jl}\ =
\\[2pt]
\label{idtwo}
(\,t-1\ts)\,
\Bigl(\ 
\prod_{1<\ts l\ts\le k}\,A_{\ts 1l}\,
\Bigr)\,\ts
&t^{\,k\ts(k-1)/2}
\!\!\prod_{\substack{1\le\ts j\ts\le k\\k<\ts l\ts\le N}}\!\!A_{\ts jl}\,.
\end{align}
If we conjugate \eqref{gaj} by any $\si_{\ts1i}$ with $i>1\,$,
the result will make contribution to $B$ only when $i\le k\,$,
and this contribution will correspond to $J=\{2\ts\lcd k\}\,$. 
Indeed, then we will need $J$ in \eqref{gaj} such that 
$\si_{\ts1i}\,(\{1\}\sqcup J)=\{1\lcd k\}\,$. 
Hence for each index $i=2\ts\lcd k$ we get a contribution to $T$
\begin{align}
\nonumber
\si_{\ts1i}\ts\Bigl(
(\,t-1\ts)\,
\Bigl(\ 
\prod_{1<\ts l\ts\le k}\,A_{\ts 1l}\,
\Bigr)\,\ts
&t^{\,k\ts(k-1)/2}
\!\!\prod_{\substack{1\le\ts j\ts\le k\\k<\ts l\ts\le N}}\!\!A_{\ts jl}\ts
\Bigr)=
\\[2pt]
\label{three}
(\,t-1\ts)\,
\Bigl(\ 
\prod_{\substack{1\le\ts l\ts\le k\\l\neq i}}\,A_{\ts il}\,
\Bigr)\,\ts
&t^{\,k\ts(k-1)/2}
\!\!\prod_{\substack{1\le\ts j\ts\le k\\k<\ts l\ts\le N}}\!\!A_{\ts jl}\,.
\end{align}

By dividing \eqref{idone},\eqref{idtwo},\eqref{three} by 
$(\,t-1\ts)\,t^{\,k\ts(k-1)/2}$
and by cancelling there~all the common factors $A_{\ts jl}$ 
the relation \eqref{ver} now reduces to the identity
$$
\frac{\,t^{\,k}-1}{t-1}\,=\,\sum_{i=1}^k\ 
\prod_{\substack{1\le j\le k\\j\neq i}}\,
\frac{\,x_i-t\,x_j}{x_i-x_j}\ .
$$
The latter identity is easy to verify, and we omit the details
of verification.
\qed
\end{proof}

Consider the operator sum over $i=1\lcd N$ appearing in
\eqref{uzi}. Denote by $I_N(u)$ the restriction of this operator sum
to the subspace $\La_N\subset\FF\ts(\ts x_1\lcd x_N)\,$.  
The $I_N(u)$ expands as a formal power series in 
$u$ with coefficients acting on~$\La_N\,$.
Our theorem means that the action of the coefficients of the series
$1+u\,I_N(u)$ on $\La_N$ coincides with the action of the respective 
coefficients~of~$D_N(\ts u\,t\ts)\ts/D_N(u)\,$.
The latter ratio should be also  
expanded as a formal power series in $u$ here.  

The coefficients of the series $I_N(u)$ 
will be called the \textit{quantum Hamiltonians} corresponding to the 
basis of Macdonald polynomials in the vector space $\La_N\,$. 
In the next subsection we give another expression for 
$I_N(u)$ by using the resolvent of a certain $N\times N$ matrix
with operator entries which act on $\FF\ts(\ts x_1\lcd x_N)\,$.


\subsection{Matrix resolvent}
\label{sec:17}

Take any $f\in\La_N^{\ts(1)}$ and consider 
the column vector
$$
\Fc\,=\,
\begin{bmatrix}
f
\\
\si_{\ts12}\,(f)
\\
\vdots
\\
\ \si_{\ts1N}\ts(f)\ 
\end{bmatrix}
$$
Now define a $N\times N$ matrix $\Zc$ with operator entries 
acting on the vector space $\FF\ts(\ts x_1\lcd x_N)$ as follows.
The $i\,,j\ts$-entry $Z_{\ts ij}$ of the matrix $\Zc$ is defined by setting 
\begin{gather}
Z_{\ts ii}=
\Bigl(\,\,\ts
\prod_{l\neq i}\,A_{\ts il}\ts
\Bigr)\,\ga_{\ts i}\,\ts,
\\
Z_{\ts ij}=
B_{\ts ij}\,
\Bigl(\,\prod_{l\neq i,j}A_{\ts jl}\ts\Bigr)\,\ga_{\ts j}
\quad\text{for}\quad i\neq j\,.
\end{gather}
Then by using the definition \eqref{wi} we have
\begin{equation}
\label{Zij}
Z_i\,=\,W_i\ts\ga_{\ts i}\,=\,Z_{\ts ii}\,+\,
\sum_{j\neq i}\,Z_{\ts ij}\,\si_{ij}
\end{equation}
where for $j\neq i$ we also use the relation 
$\si_{ij}\,\ga_{\ts i}=\ga_{\ts j}\,\si_{ij}\,$.
It follows from \eqref{Zij}~that
\begin{equation}
\label{ZF}
\begin{bmatrix}
Z_1\ts(f)
\\[2pt]
Z_{\ts2}\,\ts\si_{\ts12}\,(f)
\\
\vdots
\\[2pt]
\ \ts
Z_N\,\si_{\ts1N}\ts(f)
\ 
\end{bmatrix}
\,=\,\Zc\,\Fc\,.
\end{equation}
Indeed, in its first entry the vector
equality \eqref{ZF} holds by \eqref{Zij} with $i=1\,$.
If $i\neq1$ then by using \eqref{Zij} we get
\begin{gather*}
Z_i\,\si_{\ts1i}\,(f)\,=\,
Z_{\ts ii}\,\si_{\ts1i}\,(f)+
Z_{\ts i1}\,\si_{\ts i1}\,\si_{\ts1i}\,(f)+
\sum_{j\neq1,i}\,Z_{\ts ij}\,\si_{ij}\,\si_{\ts1i}\,(f)\,=
\\
Z_{\ts ii}\,\si_{\ts1i}\,(f)+
Z_{\ts i1}\ts(f)+
\sum_{j\neq1,i}\,Z_{\ts ij}\,\si_{\ts1j}\,(f)\,=\,
Z_{\ts i1}\ts(f)+
\sum_{j\neq1}\,Z_{\ts ij}\,\si_{\ts1j}\,(f)
\end{gather*}
as required. Here for three pairwise distinct indices $1\ts,i\ts,j$
we also use the relations
$$
\si_{ij}\,\si_{\ts1i}\,(f)=
\si_{\ts1j}\,\si_{ij}\,(f)=
\si_{\ts 1j}\,(f)\,.
$$

By the the covariance property \eqref{coz} 
of the operators $Z_1\lcd Z_N$ the column vector at the left hand
side of \eqref{ZF} has a form similar to $\Fc\,$. Namely, it can be
obtained by replacing the polynomial $f$ in $\Fc$ by $Z_1(f)\,$.
Here we use Corollary~\ref{sec:15}.
By expanding 
$(\ts1+u\,Z_i\ts)^{\ts-1}$ for every $i=1\lcd N$
as a formal power series in $u$ 
and by repeatedly using the above arguments,
we get the equality
\begin{equation}
\label{ZZZ}
\begin{bmatrix}
(\,1+u\,Z_1\ts)^{\ts-1}\ts(f)
\\[4pt]
(\,1+u\,Z_{\ts2}\ts)^{\ts-1}\,\si_{\ts12}\,(f)
\\
\vdots
\\[4pt]
\ (\,1+u\,Z_N\ts)^{\ts-1}\,\si_{\ts1N}\ts(f)\ 
\end{bmatrix}
\,=\,(\,1+u\,\Zc\,)^{\ts-1}\,\Fc\,.
\end{equation}

Now suppose $f\in\La_N$ so that
the polynomial $f$ is symmetric in all the variables
$x_1\lcd x_N\ts$. Then 
$$
f=\si_{\ts12}\,(f)=\ldots=\si_{\ts1N}\ts(f)
$$
so that $\Fc=\Ec\ts f$ where $\Ec$ is 
the column vector of size $N$ with every entry being~$1\,$.
Let $\Uc$ be the row vector of size $N$ where
the $i\ts$-entry is the $U_i$ defined by~\eqref{ui}. 
By using \eqref{ZZZ} and the definition of the series $I_N(u)$ 
as given 
in Subsection \ref{sec:16} 
$$
I_N(u)\,f=\Uc\,\ts(\,1+u\,\Zc\,)^{\ts-1}\,\Ec\,f\,.
$$
Thus we have proved that the action of $I_N(u)$ on $\La_N$
coincides with the action~of 
\begin{equation}
\label{Ham}
\Uc\,\ts(\,1+u\,\Zc\,)^{\ts-1}\,\Ec\,.
\end{equation}
Hence we now obtain the following corollary to Theorem \ref{sec:16}.

\begin{cor}
The action of the ratio 
$D_N(\ts u\,t\ts)\ts/D_N(u)$ on $\La_N$
coincides with that~of
\begin{equation*}
1\,+\,u\,\,\Uc\,\ts(\,1+u\,\Zc\,)^{\ts-1}\,\Ec\,.
\end{equation*}
\end{cor}


\section{Inverse limits}
\label{sec:3}


\subsection{Limits of covariant operators}
\label{sec:31}

Let $\FF=\QQ\ts(q,t)$ as before. 
We will find first the inverse limit at $N\to\infty$
of the restriction of the operator $Z_1$ to the subspace
\begin{equation}
\label{subspace}
\La_N^{\ts(1)}\subset\ts\FF\ts(\ts x_1\lcd x_N)\,.
\end{equation}
By Proposition \ref{sec:15} the 
operator $C_1$
has the same restriction to $\La_N^{\ts(1)}$.
The limit will be an operator acting on the space $\La\ts[\ts v\ts]$
and denoted simply by $Z\,$.
To define the limit extend the canonical homomorphism $\La\to\La_N$
to a homomorphism
\begin{equation*}
\label{piv} 
\pi_N:\,\La\ts[\ts v\ts]\,\to\,\La_N^{\ts(1)}:\,
v\,\mapsto\,x_1\,.
\end{equation*}
Here
$$
\pi_N:\,p_n\,\mapsto\,p_n(\ts x_1\lcd x_N)
\quad\text{for}\quad
n=1\ts,2\ts,\,\ldots\,.
$$

We will now define an operator $Z$ on the vector space 
$\La\ts[\ts v\ts]$ explicitly. Denote by $\xi$ and $\eta$
the automorphisms of the $\FF$-algebra $\La\ts[\ts v\ts]$
which act trivially on the subalgebra $\La$ but map the variable 
$v$ to $\,q^{\ts-1}\ts v\,$ and $\,t\,v\,$ respectively.
Thus $\xi$ is the inverse $q\ts$-shift of $v$
while $\eta$ is the usual $t\ts$-shift.
Next equip the vector space $\FF\ts[\ts v\ts]$ with the standard inner product
so that $1\ts,v\ts,v^{\ts2},\ts\ldots$ form an orthonormal basis.
Denote by $v^{\ts\circ}$ the operator
on $\FF\ts[\ts v\ts]$ conjugate to multiplication by $v\,$.~Explicitly, 
\begin{equation*}
v^{\ts\circ}:\,v^n\,\mapsto\, 
\left\{
\begin{array}{cl}
v^{\ts n-1}
&\quad\textrm{if}\quad\,n>0\,,
\\[2pt]
0
&\quad\textrm{if}\quad\,n=0\,.
\end{array}
\right.
\end{equation*}

Extend the operator $v^{\ts\circ}$ from $\FF\ts[\ts v\ts]$ to 
$\La\ts[\ts v\ts]$ by $\La\ts$-linearity.
The extension will still be denoted by $v^{\ts\circ}$. 
For every $f\in\La$ 
extend from $\La$ to $\La\ts[\ts v\ts]$ by
$\FF\ts[\ts v\ts]\ts$-linearity the operator
of multiplication by $f$ and its conjugate 
operator $f^{\ts\perp}\,$.
Recall that the superscript ${}^\perp$ here indicates
conjugation relative to the inner product \eqref{schurprod}.
The conjugate $f^{\ts\ast}$ relative to the inner product
\eqref{macprod} extends from $\La$ to $\La\ts[\ts v\ts]$ 
in the same way as $f^{\ts\perp}$ does.
Using these conventions, put $Z=W\ts\ga$ where
\begin{equation*}
\label{Wga}
\ga=\xi\ts\,Q^{\ts\ast}(v)
\quad\text{and}\quad
W=\eta\ts\,Q\ts(\ts v^{\ts\circ}\ts)\,.
\end{equation*}

\begin{teo}
We have a commutative diagram of\/ $\FF$-linear mappings
\begin{equation} 
\label{cd1}    
\begin{tikzcd}[row sep=32pt,column sep=32pt]
\La\ts[\ts v\ts]
\arrow[xshift=0pt,yshift=0pt]{r}[above=1pt]{Z}        
\arrow[xshift=0pt,yshift=0pt]{d}[left=1pt]{\pi_N} 
&
\La\ts[\ts v\ts]
\arrow[xshift=0pt,yshift=0pt]{d}[right=2pt]{\pi_N}
\\
\La_N^{\ts(1)}
\arrow[xshift=0pt,yshift=0pt]{r}[below=2pt]{Z_1}   
&
\La_N^{\ts(1)} 
\end{tikzcd}
\vspace{2pt}  
\end{equation}
\end{teo}  

\begin{proof} 
We will verify commutativity of the two diagrams obtained from
\eqref{cd1} by replacing $Z\ts,Z_1$ respectively by
$\ga\ts,\ga_{\ts1}$ and $W\ts,W_1\,$. Our theorem will then follow.

Firstly observe that the extended operator 
$Q^{\ts\ast}(v):\La\ts[\ts v\ts]\to\La\ts[\ts v\ts]$ 
appearing in the definition of $\ga$ is a homomorphism
of $\FF$-algebras, and so is $\xi:\La\ts[\ts v\ts]\to \La\ts[\ts v\ts]\,$.
Hence it suffices to show that the compositions
$\pi_N\,\ga$ and $\ga_{\ts1}\,\pi_N$ coincide on $v$
and on $p_n$ for $n=1\ts,2\ts,\,\ldots\ $.
By applying the compositions to $v$ we get the~same~result:
$$   
\begin{tikzcd}[column sep=32pt]
v
\arrow[mapsto]{r}[below=2pt]{\ga}        
&
q^{\ts-1}\ts v
\arrow[mapsto]{r}[below=2pt]{\pi_N}
& 
q^{\ts-1}\ts x_1
\end{tikzcd}
\quad\text{and}\quad   
\begin{tikzcd}[column sep=32pt]
v
\arrow[mapsto]{r}[below=2pt]{\pi_N}        
&
x_1
\arrow[mapsto]{r}[below=2pt]{\ga_{\ts1}} 
&
q^{\ts-1}\ts x_1\,.
\end{tikzcd}
$$

To check the coincidence on any generator $p_n$ note that by 
the identity~\eqref{qhh}
$$
\ga=
\xi\ts\,H^{\ts\perp}(\ts v\,q\ts)^{\ts-1}\,H^{\ts\perp}(v)\,=
H^{\ts\perp}(\ts v\ts)^{\ts-1}\,\xi\ts\,H^{\ts\perp}(v)\,.
$$
Hence due to \eqref{vop} by applying
$\pi_N\,\ga$ and $\ga_{\ts1}\,\pi_N$ to $p_n$ we also get the same result:
$$   
\begin{tikzcd}[column sep=32pt]
p_n
\arrow[mapsto]{r}[below=2pt]{\ga}        
& 
q^{\ts-n}\ts v^n-v^n+p_n
\arrow[mapsto]{r}[below=2pt]{\pi_N} 
&  
q^{\ts-n}\ts x_1+x_2^n+\ldots+x_N^n
\end{tikzcd}
\vspace{-8pt}
$$
and
$$   
\begin{tikzcd}[column sep=32pt]
p_n
\arrow[mapsto]{r}[below=2pt]{\pi_N}        
&
x_1^n+\ldots+x_N^n
\arrow[mapsto]{r}[below=2pt]{\ga_{\ts1}} 
&
q^{\ts-n}\ts x_1+x_2^n+\ldots+x_N^n\,.
\end{tikzcd}
\vspace{2pt}
$$

Now consider the compositions
$\pi_N\ts W$ and $W_1\ts\pi_N\,$.
By definition, the extended operator 
$W:\La\ts[\ts v\ts]\to \La\ts[\ts v\ts]$
commutes with multiplication by any $f\in\La\,$. But
\begin{equation}
\label{W1}
W_1\,=\, 
\prod_{1<\ts l\ts\le N}\,A_{\ts 1l}\,+\,
\sum_{1<j\le N}\,B_{\ts1j}\,
\Bigl(\,
\prod_{\substack{1<\ts l\ts\le N\\l\neq j}}\,A_{\ts jl}
\ts\Bigr)\,\si_{1j}
\end{equation}
by 
\eqref{wi}. In particular, 
the restriction of the operator $W_1$ to the subspace \eqref{subspace}
commutes with multiplication by $\pi_N(f)\in\La_N\,$.
Hence it suffices to show that the compositions
$\pi_N\ts W$ and $W_1\ts\pi_N$ coincide on the elements
$1\ts,v\ts,v^2\ts,\,\ldots\ \in\La\ts[\ts v\ts]\,$.
Let us use the generating series of these elements
\begin{equation}
\label{uv}
1+u\,v+u^2\ts v^2+\ldots\,=\,\frac1{\,1-u\,v\,}
\end{equation}
in the other variable $u\,$. 
By applying $\pi_N\ts W$ to the series \eqref{uv} we get
\begin{equation}
\label{uv1}   
\begin{tikzcd}[column sep=32pt]
\displaystyle
\frac1{\,1-u\,v\,}
\arrow[mapsto]{r}[below=2pt]{{\small W}}        
&
\displaystyle
\frac{\,Q(u)}{\,1-u\,t\,v\,}
\arrow[mapsto]{r}[below=2pt]{\pi_N} 
&
\displaystyle
\frac1{\,1-u\,t\,x_1}
\,\,\prod_{i=1}^N\,\ts
\frac{\,1-u\,t\,x_i\,}{1-u\,x_i}\ .
\end{tikzcd}
\end{equation}
Here we employed the general fact that 
for any formal power series $G\ts(u)$ with~the coefficients from $\FF$
$$
G\ts(\ts v^{\ts\circ}\ts)\,\frac1{\,1-u\,v\,}=
\frac{\,G\ts(u)}{\,1-u\,v\,}\ .
$$  
We have also employed the relation \eqref{qvprod} with the variable $v$
replaced by $u\,$. 

On the other hand, by applying $W_1\ts\pi_N$ to the series 
\eqref{uv} we get
$$  
\begin{tikzcd}[column sep=32pt]
\displaystyle
\frac1{\,1-u\,v\,}
\arrow[mapsto]{r}[below=2pt]{{\pi_N}}        
&
\displaystyle
\frac{1}{\,1-u\,x_1\,}
\arrow[mapsto]{r}[below=2pt]{W_1} 
&
\displaystyle
\frac1{\,1-u\,x_1}
\prod_{1<\ts l\ts\le N}\frac{\,x_1-t\,x_l}{x_1-x_l}
\ \,+
\end{tikzcd}
\vspace{-2pt}
$$
\begin{equation*}
\label{uv2}
\sum_{1<j\le N}\,
\frac{(\,t-1\ts)\,x_j}{(\,1-u\,x_j\ts)\,(\,x_1-x_j\ts)}\,
\prod_{\substack{1<l\le N\\l\neq j}}\,
\frac{\,x_j-t\,x_l}{x_j-x_l}\ .
\vspace{2pt}
\end{equation*}
Here we also used \eqref{AB} and \eqref{W1}.
It easy to verify that the results 
obtained in \eqref{uv1} and in the last two displayed
lines are the same. Consider them
as rational functions of 
$u$ and assume that 
$x_1\lcd x_N\neq 0\,$. Then
both rational functions vanish at $u=\infty$
and have poles only at $u=x_1^{\ts-1}\lcd x_N^{\ts-1}\,$.
All these poles are simple, and the corresponding residues 
of the two functions coincide.
\qed  
\end{proof}

Note that for any index $i=2\lcd N$ one can also consider the restriction  
of the operator $Z_i$ to the subspace of $\FF(\ts x_1\lcd x_N)$
consisting of the polynomials in $x_1\lcd x_N$ symmetric in all the variables
but $x_i\,$. By the covariance property 
\eqref{coz} our Corollary \ref{sec:15} implies that the operator $Z_i$ 
preserves this subspace. We could have defined
the extension $\pi_N$ of the homomorphism $\La\to\La_N$
from $\La$ to $\La\ts[\ts v\ts]$ by mapping the variable $v$
to $x_i$ instead of $x_1\,$. The image of $\pi_N$ would be then 
the latter subspace of $\ts\FF\ts(\ts x_1\lcd x_N)\,$.
The inverse limit of the restriction of $Z_i$ to that subspace
would be then the same operator $Z$ acting on $\La\ts[\ts v\ts]\ts$.
This coincidence follows immediately from the property \eqref{coz}.

It is the change of parameters $q\mapsto q^{\ts-1}$ and
$t\mapsto t^{\ts-1}$ in the original definition [VI.3.2] that
allowed us to state the last theorem in terms
of the Hall-Littlewood symmetric functions
$Q_1\ts,Q_{\ts2}\ts,\ts\ldots\ $.
Otherwise we would have to change
$t\mapsto t^{\ts-1}$ in the definition of the latter 
symmetric functions. 
The change of the variable $X$ in 
[VI.3.2] and the corresponding choice of normalization of
the operator $C_i$ as in \eqref{ci} and as in Proposition \ref{sec:13} 
ensure that every $Z_i$ has a limit at $N\to\infty\,$.


\subsection{Limits of quantum Hamiltonians}
\label{sec:32}

In this subsection
we will find the inverse limits at $N\to\infty$
of the quantum Hamiltonians
corresponding to the basis of Macdonald polynomials 
in $\La_N\,$. These quantum Hamiltonians are
defined as the operator coefficients of the series $I_N(u)$  
acting on the vector space 
$\La_N\,$, see the end of Subsection~\ref{sec:16}.
We will denote by $I\ts(u)$ the inverse limit of the series~$I_N(u)\ts$.

The coefficients of the series $I\ts(u)$
will be certain operators $\La\to\La\ts[\ts w\ts]$
where $w$ is yet another formal variable.
We will then eliminate the dependence of the
coefficients on $w$ by renormalising the series $I\ts(u)\,$.
Hence the coefficients of the renormalised series \eqref{ju}
will be operators acting on $\La\,$. 

Consider the sum \eqref{uzi} over $i=1\lcd N$ appearing in \eqref{uzi}.
By 
\eqref{ui} the action of this sum
on the subspace $\La_N\subset\FF\ts(\ts x_1\lcd x_N)$ 
coincides with that of
\begin{equation}
\label{uz1}
V_{\ts1}\,\ga_1\,(\,1+u\,Z_1\ts)^{\ts-1}
\end{equation}
where we set
$$
V_{\ts1}\,=\,
(\,t-1\ts)\,\,\sum_{i=1}^N\,\,
\Bigl(\,\,
\prod_{\substack{1\le\ts l\ts\le N\\l\neq i}}\,A_{\ts il}\ts
\Bigr)\,\ts
\si_{\ts1i}\,\ts.
$$
Here $\si_{\ts11}=1\,$. We will demonstrate
that the operator $V_{\ts1}$ maps the subspace \eqref{subspace} to~$\La_N\,$. 
At the same time we will determine the 
inverse limit at $N\to\infty$ of the restriction
of the operator $V_{\ts1}$ to the subspace \eqref{subspace}.
The latter limit will be an operator
$\La\ts[\ts v\ts]\to\La\ts[\ts w\ts]$ denoted simply by $V$.
To determine this limit
extend the canonical homomorphism $\La\to\La_N$ to a homomorphism
\begin{equation*}
\label{tau} 
\tau_{\ts N}:\,\La\ts[\ts w\ts]\to\La_N:\,
w\,\mapsto\,t^{\ts N}\,.
\end{equation*}
Here
$$
\tau_{\ts N}:\,p_n\,\mapsto\,p_n(\ts x_1\lcd x_N)
\quad\text{for}\quad
n=1\ts,2\ts,\,\ldots\,.
$$
This definition of the homomorphism $\tau_N$ 
goes back to \cite[Section 6]{Rains}.
Now define $V$ explicitly as the unique $\La\ts$-linear operator
$\La\ts[\ts v\ts]\to\La\ts[\ts w\ts]$ such that  
\begin{equation*}
V:\,v^n\,\mapsto\, 
\left\{
\begin{array}{cl}
-\,Q_n
&\quad\textrm{if}\quad\,n>0\,,
\\[2pt]
w-1
&\quad\textrm{if}\quad\,n=0\,.
\end{array}
\right.
\end{equation*} 
 
\begin{pro}
We have a commutative diagram of\/ $\FF$-linear mappings
\begin{equation} 
\label{cd2}    
\begin{tikzcd}[row sep=32pt,column sep=32pt]
\La\ts[\ts v\ts]
\arrow[xshift=0pt,yshift=0pt]{r}[above=1pt]{V}        
\arrow[xshift=0pt,yshift=0pt]{d}[left=1pt]{\pi_N} 
&
\La\ts[\ts w\ts]
\arrow[xshift=0pt,yshift=0pt]{d}[right=2pt]{\tau_{N}}
\\
\La_N^{\ts(1)}
\arrow[xshift=0pt,yshift=0pt]{r}[below=2pt]{V_1}   
&
\,\La_N^{\phantom{(1)}}
\end{tikzcd}
\vspace{2pt}  
\end{equation}
\end{pro}   

\begin{proof}
The operator 
$V:\La\ts[\ts v\ts]\to\La\ts[\ts w\ts]$
commutes with the multiplication by any $f\in\La\,$.
In turn, the restriction of the operator $V_1$ 
to the subspace \eqref{subspace}~commutes 
with multiplication by $\pi_N(f)\in\La_N\,$.
So it suffices to show that the compositions
$\tau_{\ts N}\,V$ and $V_1\,\pi_N$ coincide on the elements
$1\ts,v\ts,v^2\ts,\,\ldots\ \in\La\ts[\ts v\ts]\,$.
Let us again use the generating series \eqref{uv}
of these elements. By applying $\tau_{\ts N}\,V$ to \eqref{uv} we get
\begin{equation}
\label{uv3}   
\begin{tikzcd}[column sep=32pt]
\displaystyle
\frac1{\,1-u\,v\,}
\arrow[mapsto]{r}[below=2pt]{{\small V}}        
&
w-Q(u)
\arrow[mapsto]{r}[below=2pt]{\tau_{\ts N}} 
&
\displaystyle
t^{\ts N}-\ts\,\prod_{i=1}^N\,\ts
\frac{\,1-u\,t\,x_i\,}{1-u\,x_i}
\end{tikzcd}
\end{equation}
where we used 
\eqref{qvprod}.
On the other hand, by applying $V_1\,\pi_N$ to 
\eqref{uv} we get
\begin{equation}
\label{uv4}  
\begin{tikzcd}[column sep=32pt]
\displaystyle
\frac1{\,1-u\,v\,}
\arrow[mapsto]{r}[below=2pt]{{\pi_N}}        
&
\displaystyle
\frac{1}{\,1-u\,x_1\,}
\arrow[mapsto]{r}[below=2pt]{V_1} 
&
\displaystyle
\,\,\sum_{i=1}^N\,\,
\frac{t-1}{\,1-u\,x_i}
\,\prod_{\substack{1\le\ts l\ts\le N\\l\neq i}}\frac{\,x_i-t\,x_l}{x_i-x_l}
\ .
\end{tikzcd}
\end{equation}
The results 
obtained in \eqref{uv3} and \eqref{uv4} are equal to each other.
Indeed, consider them
as rational functions of 
$u$ and assume that $x_1\lcd x_N\neq 0\,$. Then
both rational functions vanish at $u=\infty$
and have poles 
at $u=x_1^{\ts-1}\lcd x_N^{\ts-1}\,$.
These poles are simple, and the corresponding residues 
of two functions coincide.
\qed
\end{proof}

By the surjectivity of $\pi_N$ the last proposition implies
that the operator $V_{\ts1}$ maps the subspace \eqref{subspace}
to~$\La_N\,$. Moreover, it implies that
the inverse limit at $N\to\infty$ 
of the restriction of the operator sum
\eqref{uz1} to~the subspace \eqref{subspace} equals
\begin{equation*}
\label{VgaZ}
V\,\ga\,(\,1+u\,Z\ts)^{\ts-1}\,=\,
\sum_{n=0}^\infty\,\ts (\ts-\ts u\ts)^{\ts n}\ts\,V\,\ga\,Z^{\ts n}\,.
\end{equation*}
By the definitions of $Z\ts,\ga$ and $V$ here for every $n\ge0$
the composition $V\,\ga\,Z^{\ts n}$
is an operator $\La\ts[\ts v\ts]\to\La\ts[\ts w\ts]\,$.
The above stated equality of the inverse limit 
follows from the commutativity of the diagram
\begin{equation*} 
\label{cd3}    
\begin{tikzcd}[row sep=32pt,column sep=32pt]
\La\ts[\ts v\ts]
\arrow[xshift=0pt,yshift=0pt]{r}[above=1pt]{Z^n}        
\arrow[xshift=0pt,yshift=0pt]{d}[left=1pt]{\pi_N} 
&
\La\ts[\ts v\ts]
\arrow[xshift=0pt,yshift=0pt]{r}[above=1pt]{\ga\ }        
\arrow[xshift=0pt,yshift=0pt]{d}[left=1pt]{\pi_N} 
&
\La\ts[\ts v\ts]
\arrow[xshift=0pt,yshift=0pt]{r}[above=1pt]{V}        
\arrow[xshift=0pt,yshift=0pt]{d}[right=1pt]{\pi_N} 
&
\La\ts[\ts w\ts]
\arrow[xshift=0pt,yshift=0pt]{d}[right=2pt]{\tau_{N}}
\\
\La_N^{\ts(1)}
\arrow[xshift=0pt,yshift=0pt]{r}[below=2pt]{Z_1^n}
&
\!\La_N^{\ts(1)}
\arrow[xshift=0pt,yshift=0pt]{r}[below=2pt]{\ga_{\ts1}}
&
\La_N^{\ts(1)}
\arrow[xshift=0pt,yshift=0pt]{r}[below=2pt]{V_1}   
&
\,\La_N^{\phantom{(1)}}
\end{tikzcd}
\vspace{2pt}  
\end{equation*}
Here we use the commutativity of \eqref{cd1},\eqref{cd2}
and that of the diagram obtained from \eqref{cd1} by replacing $Z\ts,Z_1$ 
respectively by $\ga\ts,\ga_{\ts1}\,$.
The commutativity of the diagram so obtained has been established 
as a part of our proof of Theorem~\ref{sec:31}.

Denote by $\de$ the embedding of $\La$ to $\La\ts[\ts v\ts]$
as the subspace of degree zero in $v\,$. Then we have a commutative 
diagram 
\begin{equation*} 
\label{cd4}    
\begin{tikzcd}[row sep=32pt,column sep=32pt]
\La{\phantom{[]}}\!\!
\arrow[xshift=0pt,yshift=0pt]{r}[above=2pt]{\de}        
\arrow[xshift=0pt,yshift=0pt]{d}[left=0pt]{} 
&
\La\ts[\ts v\ts]
\arrow[xshift=0pt,yshift=0pt]{d}[right=2pt]{\pi_N}
\\
\La_N^{\phantom{(1)}}\!\!\!
\arrow[xshift=0pt,yshift=0pt]{r}[below=0pt]{}   
&
\La_N^{(1)}
\end{tikzcd}  
\end{equation*}
where the left vertical arrow is the canonical projection. 
The bottom horizontal arrow is the natural embedding. It
follows that the inverse limit of $I_N(u)$ equals
\begin{equation}
\label{iu}
I\ts(u)\,=\,V\,\ga\,(\,1+u\,Z\ts)^{\ts-1}\,\de\,.
\end{equation}

By the above definition, every coefficient in the formal power 
series expansion of $I\ts(u)$ in $u$ is a certain operator
$\La\to\La\ts[\ts w\ts]\,$. 
Now consider the series
\begin{equation}
\label{ju}
(1+u)\,(\,1+u\,w\,)^{-1}\ts(\,1+u\,I\ts(u)\ts)
\end{equation}
where the summand $1$ 
in front of 
$u\,I\ts(u)$ stands for the embedding of $\La$ to $\La\ts[\ts w\ts]$
as the subspace of degree zero in $w\,$. This should cause no confusion. 
In the next subsection we will show that the series \eqref{ju} 
does not depend on $w\,$.
Hence the coefficients of this series will be operators 
mapping the vector space $\La$ to itself.

Note that by the definition of the homomorphism $\tau_{\ts N}$
and by the above given arguments, 
the series \eqref{ju} is equal to the inverse
limit at $N\to\infty$ of   
\begin{equation}
\label{juN}
(1+u)\,(\,1+u\,t^{\ts N}\,)^{\ts-1}\ts(\,1+u\,I_N(u)\ts)\,.
\end{equation}
But by using the multiplicative formula \eqref{eigenvalue},
the eigenvalue of $D_N(\ts u\,t\ts)\ts/D_N(u)$ 
on the trivial Macdonald polynomial $1\in\La_N$ 
corresponding to 
$\la=(\ts0\ts,0\ts,\,\ldots\,)$ is
$$
(1+u)^{\ts-1}\ts(\,1+u\,t^{\ts N}\,)\,.
$$
Hence our Theorem \ref{sec:16} 
implies that the eigenvalue of \eqref{juN} on 
$1\in\La_N$ equals~$1\,$.
By taking the limit at $N\to\infty$
the eigenvalue of \eqref{ju} on the trivial
Macdonald symmetric function $1\in\La$ 
also equals $1\,$. This explains the definition of \eqref{ju}.


\subsection{Truncated space}
\label{sec:33}

Here we will use the vector space decomposition 
\begin{equation}
\label{trunc}
\La\ts[\ts v\ts]=\La\oplus v\ts\La\ts[\ts v\ts]\,.
\end{equation}
The second direct summand in \eqref{trunc} will be called
the \textit{truncated space}. Relative to this decomposition
the operator $\ga$ on $\La\ts[\ts v\ts]$ is represented by the $2\times2$
matrix with operator entries
$$
\begin{bmatrix}
\ 1&\,0\ 
\\[2pt]
\ \be&\al\ 
\end{bmatrix}
$$
where $\be$ denotes the composition of the restriction of $\ga$ to
the first summand in \eqref{trunc}
with the projection to the second summand.
The map $\ga$ preserves the second summand, 
and $\al$ denotes the restriction of $\ga$ to it.
Similarly, the operator $W$ on $\La\ts[\ts v\ts]$ 
is represented by the $2\times2$ matrix with operator entries
$$
\begin{bmatrix}
\ 1&\,Y\,
\\[2pt]
\ 0&\,X\, 
\end{bmatrix}
$$
where $X$ and $Y$ respectively denote the compositions of the
restriction of $W$ to the second summand in \eqref{trunc}
with the projections to the first and to the second summands.
Note that the operator $Z=W\ts\ga$ is then represented by the product 
\begin{equation}
\label{Zmat}
\begin{bmatrix}
\ 1&\,Y\,
\\[2pt]
\ 0&\,X\, 
\end{bmatrix}
\begin{bmatrix}
\ 1&\,0\ 
\\[2pt]
\ \be&\al\ 
\end{bmatrix}
=
\begin{bmatrix}
\ 1+Y\be&\ Y\al\ 
\\[2pt]
\ X\ts\be&\,X\ts\al\ 
\end{bmatrix}
.
\end{equation}

By definition the operator $V:\La\ts[\ts v\ts]\to\La\ts[\ts w\ts]$ 
acts on the first direct summand in \eqref{trunc} as multiplication 
by $w-1\,$. The restriction of $V$ to the second direct summand
does not depend on $w\,$. Thus it maps $v\ts\La\ts[\ts v\ts]$ to $\La\,$.
Moreover, by the definitions of $V$ and $W$ 
this restriction coincides with the operator $-\ts Y\,$. 
Hence relative to 
\eqref{trunc} the operator $V$ is represented
by the row 
with operator entries
$$
\begin{bmatrix}
\,w-1\ts\,\ts,\,-\ts Y\,
\end{bmatrix}
.
$$

Finally denote $L=\al\,X\,$. This is an operator
on the truncated space $v\ts\La\ts[\ts v\ts]\,$. We shall call it the 
\textit{Lax operator} for the Macdonald symmetric functions in $\La\,$.
This terminology is justified by the following theorem. 

\begin{teo}
The series \eqref{ju} is equal to $\,(\ts1+u\ts)\,(\ts1+u+u\,J\ts(u))^{\ts-1}$ 
where
$$
J\ts(u)\,=\,Y\,(\,1+u\,L\ts)^{\ts-1}\ts\be\,.
$$
In particular, the series \eqref{ju} does not depend on the variable $w\,$. 
\end{teo}

\begin{proof}
Relative to 
\eqref{trunc} the operator $\de:\La\to\La\ts[\ts v\ts]$
is represented by the column with two operator entries
$$
\begin{bmatrix}
\,1\,
\\
0 
\end{bmatrix}
.
$$
Therefore the operator product $(\,1+u\,Z\ts)^{\ts-1}\,\de$ appearing in the 
definition \eqref{iu} of the series $I\ts(u)$ is represented 
by the first column of the $2\times2$ matrix inverse~to
$$
\begin{bmatrix}
\ 1+u+u\ts Y\be&\ u\,Y\al\ 
\\[2pt]
\ u\,X\ts\be&\,1+u\,X\ts\al\ 
\end{bmatrix}
.
$$
Here we employ the matrix representation \eqref{Zmat} of the operator $Z\,$.
To find that column we will use a well known formula
for the inverse of a $2\times2$ block matrix with 
invertible diagonal blocks, see \cite[Lemma 3.2]{B}.
The block matrix is assumed to be invertible too.
The first entry of the first column that we find in this way~is
\begin{gather}
\notag
(\,1+u+u\,Y\be-u\,Y\al\,(\ts1+u\,X\ts\al\ts)^{\ts-1}\,u\,X\ts\be\,)^{\ts-1}=
\\
\notag
(\,1+u+u\,Y\be-u^{\ts2}\,Y\al\,X\,(\ts1+u\,\al\,X\ts)^{\ts-1}\ts\be\,)^{\ts-1}=
\\
\notag
(\,1+u+u\,Y\,(1-u\,L\,(\ts1+u\,L\ts)^{\ts-1}\ts)\,\be\,)^{\ts-1}=
\\
\notag
(\,1+u+u\,Y\,(\ts1+u\,L\ts)^{\ts-1}\,\be\,)^{-1}=
\\
\label{entry1}
(\,1+u+u\,J\ts(u)\ts)^{-1}\,.
\end{gather}
The second entry of the first column of the inverse matrix 
that we find is then   
\begin{gather}
\notag
-\,(\ts1+u\,X\ts\al\ts)^{\ts-1}\,u\,X\ts\be\,(\,1+u+u\,J\ts(u)\ts)^{\ts-1}=\ 
\\
\label{entry2}
-\,u\ts\,X\,(\ts1+u\,L\ts)^{\ts-1}\be\,(\,1+u+u\,J\ts(u)\ts)^{\ts-1}\,.
\end{gather}

The product $V\,\ga$ in 
\eqref{iu} is represented by the row with operator entries
$$
\begin{bmatrix}
\,w-1\ts
\,\ts,\,
-\ts Y\,
\end{bmatrix}
\begin{bmatrix}
\ 1&\,0\ 
\\[2pt]
\ \be&\al\ 
\end{bmatrix}
=
\begin{bmatrix}
\,w-1-Y\ts\be
\ ,\,
-\ts Y\al\,
\end{bmatrix}
.
$$
The series $I\ts(u)$ is equal to the product of
this row by the column representing 
$(\,1+u\,Z\ts)^{\ts-1}\,\de\,$. That column
has the entries \eqref{entry1} and \eqref{entry2}.
Hence $I\ts(u)$ equals 
\begin{gather*}
(\ts w-1-Y\ts\be\ts)\ts(\,1+u+u\,J\ts(u)\ts)^{-1}+
Y\al\,u\,X\ts(\ts1+u\,L\ts)^{\ts-1}\be\,(\,1+u+u\,J\ts(u)\ts)^{\ts-1}
\\
=\,(\ts w-1-Y\ts(\ts1-u\,L\,(\ts1+u\,L\ts)^{\ts-1})\,\be\,)\ts
(\,1+u+u\,J\ts(u)\ts)^{\ts-1}
\\
=\,(\ts w-1-Y\ts(\ts1+u\,L\ts)^{\ts-1}\ts\be\,)\ts
(\,1+u+u\,J\ts(u)\ts)^{\ts-1}
\\
=\,(\ts w-1-J\ts(u)\ts)\ts(\,1+u+u\,J\ts(u)\ts)^{\ts-1}\,.
\end{gather*}
Our theorem immediately follows from the last displayed
expression for $I\ts(u)\,$.~\qed
\end{proof}

Note that 
by replacing in the series $-\,u\,J(u)$
the variable $u$ by $-\,u^{\ts -1}$ we get the same
generating series for the limits 
of the quantum Hamiltonians at $N\to\infty$ as was denoted
in \cite[Section 2]{NS4} by $I\ts(u)\,$. But in the present article
the notation $I\ts(u)$ was introduced in \eqref{iu}
and has a meaning different from that in \cite{NS4}.


\section*{\normalsize\bf Acknowledgements}

We are grateful to I.\,V.\,Cherednik for illuminating conversations.
The first named author was supported by the EPSRC grant EP/\ts N\ts023919,
and by the programme 
\emph{Geometry and Representation Theory\/} 
at the Erwin Schr{\"o}dinger Institute.
The second named author was supported by
Leverhulme Senior Research Fellowship.




\begin{thebibliography}{111}

\bibitem[1]{B}
F.\,A.\,Berezin, 
\textit{Introduction to Superanalysis},
D.\,Reidel, 
Dordrecht, 1987.

\bibitem[2]{C}
{I.\,V.\,Cherednik},
{\it Double Affine Hecke Algebras},
Cambridge University Press, 2005. 

\bibitem[3]{H}
G.\,J.\,Heckman,
\emph{An elementary approach to the hypergeometric shift operators of Opdam},
Inv. Math.
\textbf{103} (1991), 341--350.

\bibitem[4]{KMS}
S.\,M.\,Khoroshkin, M.\,G.\,Matushko and E.\,K.\,Sklyanin,
\emph{On spin Calogero\ts-Moser system~at infinity},
J. Phys.
\textbf{A\ts50} (2017), 115203.

\bibitem[5]{K}
H.\,Konno,
\emph{Dynamical correlation functions and 
finite-size scaling in the Ruijsenaars-Schneider model},
Nucl. Phys. \textbf{B\ts473} (1996), 579--600.

\bibitem[6]{M}
I.\,G.\,Macdonald, {\it Symmetric Functions and Hall Polynomials},
Oxford University Press, 1995.

\bibitem[7]{NS3}
M.\,L.\,Nazarov and E.\,K.\,Sklyanin,
\emph{Integrable hierarchy of the quantum Benjamin\ts-Ono equation},
Symmetry, Integrability and Geometry: Methods and Applications 
\textbf{9} (2013), 078.

\bibitem[8]{NS4}
M.\,L.\,Nazarov and E.\,K.\,Sklyanin,
\emph{Lax operator for Macdonald symmetric functions}, 
{Lett. Math. Phys.}
\textbf{105} (2015), 901--916.

\bibitem[9]{Rains}
E.\,M.\,Rains,
\emph{$BC_n$-symmetric polynomials},
Transform. Groups 
\textbf{10} (2005), 63--132.

\bibitem[10]{RS}
S.\,N.\,M.\,Ruijsenaars and H.\,Schneider,
\emph{A new class of integrable systems and its relation to solitons},
{Ann. Phys.} 
\textbf{170} (1986), 370--405.

\bibitem[11]{R}
S.\,N.\,M.\,Ruijsenaars,
\emph{Complete integrability of relativistic 
Calogero\ts-Moser systems and elliptic function identities},
{Comm. Math. Phys.}
\textbf{110} (1987) 191--213.

\bibitem[12]{SV0}
A.\,N.\,Sergeev and A.\,P.\,Veselov,
\emph{Deformed Macdonald\ts-Ruijsenaars operators 
and super Macdonald polynomials},
Comm. Math. Phys. 
\textbf{288} (2009), 653--675.

\bibitem[13]{SV1}
A.\,N.\,Sergeev and A.\,P.\,Veselov,
\emph{Jack\ts-Laurent symmetric functions},
Proc. London Math. Soc. 
\textbf{111} (2015), 63--92.

\bibitem[14]{SV2}
A.\,N.\,Sergeev and A.\,P.\,Veselov,
\emph{Dunkl operators at infinity and Calogero\ts-Moser systems},
{Int. Math. Res. Notices}
(2015), 10959--10986. 

\bibitem[15]{UWH}
H.\,Ujino, M.\,Wadati and H.\,Hikami,
\emph{The quantum Calogero\ts-Moser model: algebraic structures},
{J. Phys.\,Soc. Japan}
\textbf{62} (1993), 3035--3043.

\end{thebibliography}
\end{document}